\newtheorem{proposition}{Proposition}
\newtheorem{lemma}{Lemma}
\newtheorem{theorem}{Theorem}
\newtheorem{corollary}[proposition]{Corollary}
\newcommand{\scalar}[2]{\langle #1 | #2 \rangle}
\newcommand{\ketbra}[2]{| #1 \rangle \langle #2 |}
\newcommand{\ket}[1]{| #1 \rangle}
\newcommand{\bra}[1]{\langle #1 |}
\newcommand{\tr}{\mathrm{Tr}}
\begin{document}
\title{
    Majorization 
uncertainty relations
  for mixed quantum states}

\author{Zbigniew Pucha{\l}a}
\email{z.puchala@iitis.pl}
\affiliation{Institute of Theoretical and Applied Informatics, Polish Academy
of Sciences, ul. Ba{\l}tycka 5, 44-100 Gliwice, Poland}
\affiliation{Faculty of Physics, Astronomy and Applied Computer Science, 
Jagiellonian University, ul. {\L}ojasiewicza 11,  30-348 Krak{\'o}w, Poland}

\author{ \L ukasz Rudnicki} 
\affiliation{Max-Planck Institute for the Science of Light, Staudtstra{\ss}e 2, 91058 Erlangen, Germany}
\affiliation{Center for Theoretical Physics, Polish Academy of Sciences, 
Al. Lotnik{\'o}w 32/46, 02-668 Warsaw, Poland}

\author{Aleksandra Krawiec}
\affiliation{Institute of Theoretical and Applied Informatics, Polish Academy
of Sciences, ul. Ba{\l}tycka 5, 44-100 Gliwice, Poland}
\affiliation{Institute of Mathematics, University of Silesia,
ul. Bankowa 14, 40-007 Katowice, Poland}

\author{Karol {\.Z}yczkowski}
\affiliation{Faculty of Physics, Astronomy and Applied Computer Science, 
Jagiellonian University, ul. {\L}ojasiewicza 11,  30-348 Krak{\'o}w, Poland}
\affiliation{Center for Theoretical Physics, Polish Academy of Sciences, 
Al. Lotnik{\'o}w 32/46, 02-668 Warsaw, Poland}

\date{September 29, 2017}

\begin{abstract}
Majorization uncertainty relations are generalized for an arbitrary
mixed quantum state $\rho$ of a finite size $N$. In particular, a lower bound for the sum of
two entropies characterizing probability distributions corresponding to
measurements with respect to arbitrary two orthogonal bases is derived in terms
of the spectrum of $\rho$ and the entries of a unitary matrix $U$ relating both
bases. The obtained results can also be formulated for two measurements performed on
a single subsystem of a bipartite system described by a pure state, and consequently expressed
as uncertainty relation for the sum of conditional entropies.
\end{abstract}

\pacs{03.65.Ta, 03.67.-a, 03.67.Ud}
\keywords{uncertainty relations, majorization, R\'{e}nyi entropy, Tsallis entropy}

\maketitle

\section{Introduction}
Even though uncertainty relations (URs) are often considered to be a hallmark of quantum mechanics, the field devoted to quantification of fundamental, quantum mechanical uncertainty recently undergoes an accelerating progress (cf. \cite{BR11,CBTW17} and references therein). Quite understandably, current research efforts are not much oriented towards the well-established original formulations due to Heisenberg \cite{He27}, Kennard
\cite{Ke27} and Robertson \cite{Ro29}, involving the product of two variances,
or their pioneering entropic counterpart by
Bia{\l}ynicki-Birula and Mycielski \cite{BBM75}.
Rare exceptions (few examples rather than a comprehensive list) from that ''rule'', such as geometrical description of multimode position-momentum uncertainty relations \cite{Weigert} by means of smooth functions of second moments, a scheme for entanglement detection based on Robertson UR for three mutually unbiased directions \cite{3MUBSCV}, or covariance-dependent improvements of \cite{BBM75} driven by phase-space description of (non)Gaussian states in quantum optics \cite{Hertz}, share a common conceptual root. It is recognized that the URs in general are not only an inherent part of the formalism. Instead, they prove to be useful and efficient in providing relevant pieces of information in various practical cases,  including experimental ones. 


A popular way of describing uncertainty for states belonging to an $N$--dimensional Hilbert space in terms of information entropies is based on famous results by Deutsch \cite{De83} and Maassen-Uffink \cite{MU88}. The bounds derived in \cite{De83,MU88} valid for the sum of both entropies were expressed as a function of the modulus of the
largest entry of the unitary matrix $U$, which relates both measurement bases.
Uncertainty relations optimized first for two-dimensional Hilbert space
\cite{SR98,GMR03}, were later studied in higher dimensions \cite{VS08,WW10} as well as strengthened, recast \cite{Ra12,CCYZ12,KTW14} and extended to: more than two orthogonal measurements
\cite{WW10} and generalized quantum measurements
\cite{Bosyk3,RZ16}; presence of quantum memory \cite{BCRR10,CF15} and quantum
correlations \cite{CP14}; generalized entropies  \cite{BB06,TR11,ZBP14}, etc.

The sole entropic formalism fits well into thermodynamic considerations on a 
quantum level~\cite{Urbanik}. However, the plethora of associated uncertainty 
relations mentioned above has not played a visible role so far in the growing 
field of quantum thermodynamics. It might be considered surprising since, for 
instance, the generalized second laws of quantum thermodynamics 
\cite{secondlaws} involve the whole family of generalized R{\'enyi} information 
entropies. Similarly, a related notion of quantum coherence \cite{Plenio} with 
its already broad scope ranging from thermodynamics \cite{Aberg}, via quantum 
transport \cite{Abu}, to interference experiments \cite{Janos,Kai}, was 
scarcely (with a single exception \cite{PRCPZ15}) studied with the help of the 
rich machinery of URs. The reasons behind that seem twofold. First of all, in 
applications in which one is interested in keeping track of certain processes, 
one often focuses on interrelations between sole probabilities, rather that 
between the combined measures such as entropies. Moreover, in most of the cases 
it is important to take into account that the quantum states in question are 
mixed, while the standard URs deal with pure states. Of course, most of the 
known URs do remain valid for all mixed states, though they usually contain no 
valuable improvements, as compared to the bare case of pure states.

Possible remedy for the first issue is brought by the concept of majorization.
Making use of the majorization relations between the probability vectors, it was
possible to improve uncertainty relations available for pure states and
in fact formulate them for any Schur concave function \cite{FGG13,PRZ13}. On a 
more general level, the formal theory of uncertainty quantification based on 
majorization has been established \cite{Gour16} and is promised to 
substantially broaden a class of available operational interpretations, likely 
in the spirit of resource theories. In regard to the second limitation 
mentioned, some attempts to extend (entropic)
uncertainty relations to the case of mixed states were reported in
\cite{CP14,RPZ14,KLJR14,kurzyk2017conditional}. Though, an effort to  combine the majorization techniques with the full description of a quantum state in terms of its density matrix has not yet been taken.

Even though recent literature on entropic uncertainty relations grows fast \cite{BR11,CBTW17}, several
interesting problems do remain open. Thus, the technical aim of this work is to supplement the direct-sum
majorization uncertainty relations derived in \cite{RPZ14} by non-trivial information about mixedness of the quantum state.
We believe the presented results enrich the toolbox of the quantum information theory relevant for studies on quantum thermodynamics, wave-particle duality and beyond. Following such a goal, we also consider a bipartite pure state subject to two
orthogonal measurements performed locally on a single subsystem. As a by-product,
we derive lower bounds for the sum of conditional entropies, which can be
expressed in terms of the mutual information.


The present paper is organized as follows. In Section~2 we review necessary
notions and introduce notation used in the paper. Main majorization relation and
the corresponding entropic uncertainty relations are formulated in Section~3.
Comparison of the strength of the obtained bounds with other results from the
literature is presented in Section~4. The paper is concluded in Section~5, while
a proof of the lemma used to prove the main theorem is postponed to Appendix.

\section{Preliminaries} 
We shall start by fixing the basic notation.
For a density matrix $\rho$ of a finite dimension $N$ one defines its von 
Neumann entropy $S$  as
\begin{equation}
S(\rho) = - \tr \rho \ln \rho = - \sum_{j=1}^N \lambda_j \ln \lambda_j,
\end{equation}
where the vector $(\lambda_1, \dots , \lambda_N)$ denotes the spectrum of $\rho$.
A von Neumann measurement given by  an orthogonal basis $\{\ket{a_i}\}_i$ 
yields the probability vector, $p_i = \bra{a_i} \rho \ket{a_i}$, $i=1,\dots, N$
characterized  by the Shannon entropy $H(p)=- \sum_{i=1}^N p_i \ln p_i$ of the measurement outcomes. We can see that, on the one hand the von Neumann entropy of $\rho$  is equal to the Shannon entropy of the vector $\lambda$, while on the other hand the Shannon entropy of the measurement probabilities $p$ is equal to the  von Neumann entropy of the post-measurement state (its eigenvalues are $p_i$).

Thus, for a bipartite state $\rho_{AB}$ acting on a composite Hilbert space ${\cal
H}_A \otimes {\cal H}_B$ one defines the joint entropy
\begin{equation}
 S(\rho_{AB}) \equiv H(AB),
\end{equation} 
where $ H(AB)$ is the Shannon entropy of the spectrum of $\rho_{AB}$. By 
introducing the argument '$AB$' instead of the probability vector, we 
modify the notation a bit to make a connection with standard information 
theory. We believe no misunderstanding shall occur due to that step. The 
reduced density matrices $\rho_A = \tr_B \rho_{AB}$ and
$\rho_B = \tr_A \rho_{AB}$
allow one to introduce the entropies of both reductions, 
\begin{equation}
H(A) = S(\rho_A), 
\ \ \ 
H(B) =S(\rho_B).
\end{equation} 
In Section~4 we are also going to use the conditional von Neumann entropy,
\begin{equation}
\label{condentr}
H(A|B) = H(AB) - H(B),
\end{equation}
and the mutual information,
\begin{equation}
\label{mutual}
I(A;B) = H(A)+H(B)-H(AB) \equiv I(B;A).
\end{equation}

\medskip

As the majorization uncertainty relations led to lower bounds for the sum of entropies, we shall also introduce two generalized variants of the Shannon entropy, namely,  due to R{\'e}nyi
\begin{equation}
H_{\alpha}(p) = \frac{1}{1-\alpha} \ln \left(\sum_i p_i^{\alpha}\right),
\end{equation}
and Havrda--Charv\'{a}t, popularized in physics community by Tsallis
\begin{equation}
T_{\alpha}(p) = \frac{1}{1-\alpha} \left(\sum_i p_i^{\alpha} -1 \right).
\end{equation}
It is easy to see that the first quantity is additive while
 the second one is not. In the case $\alpha \to 1$, both quantities
reduce to the Shannon entropy $H(p)$.

Consider two real vectors $x$ and $y$ of sizes $N_1$ and $N_2$, respectively. We
associate vectors $x^{\downarrow}, y^{\downarrow}$ of size $\max \{N_1, N_2\}$
with coefficients ordered decreasingly and zeros on additional coordinates added
to the shorter vector. The vector $x$ is said to be majorized by $y$, written $x
\prec y$, if $x^{\downarrow}$ and $y^{\downarrow}$ satisfy inequalities for all
partial sums and sum up to a common value
\begin{equation}
\begin{split}
\sum_{i=1}^m x_i^{\downarrow} &\leq \sum_{i=1}^m y_i^{\downarrow}
\ \ \text{ for } m \leq \max(N_1,N_2),\\
\sum_{i=1}^{N_1} x_i &=\sum_{i=1}^{N_2} y_i.
\end{split}
\end{equation}
Majorization relation between two vectors allows one to write down inequalities
between any Schur--concave functions of both arguments \cite{MO79}. This
technique applied to generalized entropies was used in \cite{PRZ13,FGG13} to
derive  majorization entropic uncertainty relations for two orthogonal
measurements performed on a single system described by a pure state. In the
present work even stronger bounds derived in \cite{RPZ14} will be extended for
arbitrary mixed quantum states.

\subsection{Direct-sum majorization relation}


While looking for entropic uncertainty relations one attempts to  bound the sum of entropies for two
probability vectors $p$ and $q$, which characterize outcomes of two measurements
of a given quantum state $\rho$. To fix the notation we assume that $p$ and $q$ are probabilities associated to
orthogonal measurements described by two bases mutually related by a unitary matrix $U$,
\begin{equation}
\begin{split}
p_i &= \bra{i} \rho \ket{i}, \\
q_i &= \bra{i} U^\dagger\rho U\ket{i}.
\end{split}
\end{equation}

As in ~\cite{RPZ14}, given a unitary matrix $U$ of size $N$, we define a
set of rectangular submatrices with a fixed perimeter,
\begin{eqnarray}
\mathcal{SUB}(U,k)&=&\{M:\#\textrm{cols}(M)+\#\textrm{rows}(M)=k+1\nonumber\\
&&\text{ and }M\text{ is a submatrix of }U\}.
\end{eqnarray}
The symbols $\#\textrm{cols}\left(\cdot\right)$ and 
$\#\textrm{rows}\left(\cdot\right)$
denote the number of columns and the number of rows respectively. We further define 
coefficients  $s_k$~\cite{PRZ13} 
\begin{equation}\label{def:s_k}
s_{k}=\max\left\{ \|M\|:\; M\in\mathcal{SUB}(U,k)\right\},
\end{equation}
with $\|M\|$ being the operator norm equal to the maximal singular value of $M$.
By construction, $c=s_1 \leq s_2 \leq s_3 \leq \dots \leq s_N =1$.
Next we define a vector $W$ of length $N$,
\begin{equation}
\label{def:W}
W=\left[s_{1},s_{2}-s_{1},\dots ,s_{N}-s_{N-1}\right]^T.
\end{equation}
The direct-sum
majorization uncertainty relation derived in  \cite{RPZ14} tells us that given a pure input state $\rho=\ketbra{\psi}{\psi}$, we have
\begin{equation}
p \oplus q \prec \{1\} \oplus  W.
\end{equation}
\medskip 

\section{Main result}
Inspired by the approach briefly sketched above, we now
consider generalized majorization relations based on the direct sum of two vectors,
\begin{equation}
p \oplus q \prec W^{(\lambda)},
\end{equation}
for some vector $W^{(\lambda)}$ which now shall also depend on the spectrum $(\lambda)$ of a
measured state.  In order to construct the vector $W^{(\lambda)}$ we first introduce an auxiliary real vector $\boldsymbol{S}$,
\begin{equation}
\label{def:S-mixed}
\begin{split}
S_k &= 
\lambda_1 (1 + s_{k-1}) + \dots +\lambda_n (1 + s_{n}) +
\lambda_{n+1} (1 - s_{n}) +
\dots +
\lambda_{k} (1 - s_{k-1}) \ \ \text{ for  } k = 2n,  \\
S_k &= 
\lambda_1 (1 + s_{k-1}) + \dots +\lambda_n (1 + s_{n+1}) +
\lambda_{n+1} + 
\lambda_{n+2} (1 - s_{n+1}) +
\dots +
\lambda_{k} (1 - s_{k-1}) \ \ \text{ for } k = 2 n+1.
\end{split}
\end{equation}
Even though we use here the symbol reserved for the von Neumann entropy, no confusion shall occur since now $\boldsymbol{S}$ is the bold-faced vector. We use the notation $S_k$ in order to make a conceptual connection with $s_k$ defined before for the case of pure states. First elements of the vector $\boldsymbol{S}$ read: $S_1 = \lambda_1$, \ 
$S_2= \lambda_1(1+s_1)+\lambda_2(1-s_1)$ and
$S_3=\lambda_1(1-s_2)+\lambda_2 +\lambda_3(1-s_2)$.
Finally, we define a vector $W^{(\lambda)}$ of length $2 N$
with elements given by the difference,
\begin{equation}
\label{def:Wlambda}
W^{(\lambda)}_{k} = S_k - S_{k-1}
\end{equation}

By definition we can recast the coefficients of $W^{(\lambda)}$ as 
\begin{equation}
\begin{split}
W^{(\lambda)}_{k} &= S_{k} - S_{k-1} =
\lambda_1 (s_{k-1} -s_{k-2}) +
\lambda_2 (s_{k-2} -s_{k-3}) +
\dots + \\
&+
\lambda_{\lceil\frac{k}{2}\rceil-1 }(s_{\lfloor\frac{k}{2} \rfloor +1} - 
s_{\lfloor\frac{k}{2} \rfloor} )+
\lambda_{\lceil\frac{k}{2}\rceil}(s_{\lfloor\frac{k}{2}\rfloor})
+
\lambda_{k}(1-s_{k-1}),
\end{split}
\end{equation}
which can be rewritten with  use of the original vector $W$ defined in 
Eq.~\eqref{def:W}
\begin{equation}
\begin{split}
W^{(\lambda)}_{k} &=
\lambda_1 W_{k-1} +
\lambda_2 W_{k-2} +
\dots +
\lambda_{\lceil\frac{k}{2}\rceil-1 } W_{\lfloor\frac{k}{2} \rfloor +1} 
\\
&+
\lambda_{\lceil\frac{k}{2}\rceil}
\left(\sum_{j=1}^{\lfloor\frac{k}{2}\rfloor} W_j\right)
+
\lambda_{k}
\left(
\sum_{j=k}^d W_j
\right).
\end{split}
\end{equation}
First elements of the vector $W^{(\lambda)}$ have the following form: 
$W^{(\lambda)}_1 = \lambda_1$ and
$W^{(\lambda)}_2=\lambda_1 s_1 + \lambda_2(1-s_1).$

The above notation allows us to state the main theorem:
\begin{theorem}\label{th:main-th}
For a unitary matrix $U$ of size $N$ and a density matrix $\rho$ with
eigenvalues given by ($\lambda$), the following majorization relation holds
\begin{equation}
\label{major17}
p \oplus q \prec W^{(\lambda)}.
\end{equation}
\end{theorem}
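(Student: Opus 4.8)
The plan is to prove \eqref{major17} through the partial-sum characterisation of majorization: $x \prec y$ holds precisely when $\sum_{i=1}^m x^\downarrow_i \le \sum_{i=1}^m y^\downarrow_i$ for every $m$ and the total sums agree. Equality of total sums is immediate: $\sum (p \oplus q) = 2$ since $p,q$ are probability vectors, while $\sum_k W^{(\lambda)}_k = S_{2N}$ telescopes and equals $2$ by a direct computation using $s_N = 1$ and $\sum_j \lambda_j = 1$. Since moreover $\sum_{i=1}^m (W^{(\lambda)})^\downarrow_i \ge \sum_{i=1}^m W^{(\lambda)}_i = S_m$, it suffices to establish, for every $m = 1, \dots, 2N$, the bound $\sum_{i=1}^m (p \oplus q)^\downarrow_i \le S_m$.

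The second step is to recast this partial sum variationally. The direct-sum structure gives $\sum_{i=1}^m (p\oplus q)^\downarrow_i = \max_{a+b=m}\bigl(\sum_{i=1}^a p^\downarrow_i + \sum_{i=1}^b q^\downarrow_i\bigr)$, the maximum taken over nonnegative integers. Writing $\Pi_A = \sum_{i\in A}\ketbra{i}{i}$ for coordinate projections, one has $\sum_{i=1}^a p^\downarrow_i = \max_{|A|=a}\tr(\Pi_A\rho)$ and $\sum_{i=1}^b q^\downarrow_i = \max_{|B|=b}\tr(U\Pi_B U^\dagger\rho)$, hence $\sum_{i=1}^m(p\oplus q)^\downarrow_i = \max_{a+b=m}\ \max_{|A|=a,\,|B|=b}\ \tr\bigl((\Pi_A + U\Pi_B U^\dagger)\,\rho\bigr)$. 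For fixed $A,B$ set $M = \Pi_A + U\Pi_B U^\dagger$, a positive operator with $\|M\| \le 2$; the rearrangement trace inequality (see \cite{MO79}) yields $\tr(M\rho) \le \sum_j \lambda_j\,\mu_j$, where $\mu_1 \ge \mu_2 \ge \cdots$ is the ordered spectrum of $M$ and $\lambda$ is already ordered. The whole problem is thereby reduced to showing that $\sum_j\lambda_j\mu_j(M)$, maximised over $a,b,A,B$, does not exceed $S_m$.

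This last estimate is the content of the lemma whose proof is deferred to the Appendix. Its proof would run as follows. By the CS-decomposition of the two subspaces $\ran\Pi_A$ and $\ran(U\Pi_B U^\dagger)$, the spectrum of $M$ consists of the numbers $1\pm\sigma_i$, where the $\sigma_i$ are the singular values of the submatrix of $U$ with rows $A$ and columns $B$, accompanied by prescribed multiplicities of $0$, of $1$, and --- when $a+b>N$ --- of $2$. Cauchy interlacing for singular values under deletion of columns, combined with the definition \eqref{def:s_k} of $s_k$, gives $\sigma_i \le s_{a+b-i}$. One then maximises $\sum_j\lambda_j\mu_j$ in two monotone steps, each of which uses only that $\lambda$ is nonincreasing, so that spreading any pair of eigenvalues apart while preserving their sum can only increase $\sum_j\lambda_j\mu_j$: first, enlarge every $\sigma_i$ to its maximal value $s_{a+b-i}$ (this spreads each pair $\{1-\sigma_i,\,1+\sigma_i\}$); second, push the split $(a,b)$ towards the balanced one $a=\lfloor m/2\rfloor$, $b=\lceil m/2\rceil$, each unit step replacing two eigenvalues equal to $1$ by a spread pair $\{1-s_\ell,\,1+s_\ell\}$. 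Collecting terms, the value of the resulting extremal configuration is exactly $S_m$ as given in \eqref{def:S-mixed}.

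The main obstacle is precisely this lemma --- keeping the spectral bookkeeping of $M$ correct as $(a,b)$ and the $\sigma_i$ vary, in particular in the regime $m>N$ where $\ran\Pi_A$ and $\ran(U\Pi_B U^\dagger)$ are forced to intersect and the eigenvalue $2$ enters, and then checking that the two monotonicity reductions terminate at the configuration whose value equals $S_m$. Granting the lemma, the chain $\sum_{i=1}^m(p\oplus q)^\downarrow_i \le S_m \le \sum_{i=1}^m(W^{(\lambda)})^\downarrow_i$ together with the matching total sums at $m=2N$ gives \eqref{major17}. It is worth stressing why one cannot simply average the pure-state relation $p\oplus q \prec \{1\}\oplus W$ of \cite{RPZ14} over the eigen-decomposition of $\rho$: convexity of majorization would only reproduce that same pure-state bound, so the mixedness must be exploited jointly, through the operator $M$ and the rearrangement inequality, which is exactly what the argument above does.
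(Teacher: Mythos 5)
Your proposal is correct and follows essentially the same route as the paper: the partial sums of $p\oplus q$ are controlled by $\tr\bigl((\Pi_A+U\Pi_B U^\dagger)\rho\bigr)$, bounded via the von Neumann/rearrangement trace inequality by $\lambda^{\downarrow}\cdot\mu^{\downarrow}$ with the spectrum $1\pm\sigma_i$ of the two-projector sum (the paper gets this from $\mathrm{eig}(CC^\dagger)$ rather than the CS decomposition, which is the same computation), then Cauchy interlacing replaces $\sigma_i$ by $s_{a+b-i}$ and the balanced split $a=\lfloor m/2\rfloor$ is shown extremal, yielding $S_m$. This reproduces the paper's Lemma~\ref{lemma:general-inequality}, Proposition~\ref{prop:general-bound}, and the final optimization step.
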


The proof of above theorem utilizes the lemma presented below, but proven in Appendix.
\begin{lemma} \label{lemma:general-inequality}
Let $\lambda$ be a given vector of non-negative eigenvalues summing up to one.
Let $\rho$ be a mixed state with eigenvalues given by $\lambda$.
Let $p_i = \bra{i} \rho \ket{i}$ and $q_i =  \bra{a_i} \rho \ket{a_i}$. We 
assume, that $\{\ket{i}\}_i$ and $\{\ket{a_i}\}_i$ form two orthonormal bases. 
Then the following inequality is true
\begin{equation}
p_1+ \dots + p_m + q_1+\dots + q_n \leq \lambda^{\downarrow} \cdot 
\mu^{\downarrow},
\end{equation}
where $\cdot$ denotes the scalar product. The vector $\mu$ of length $m+n$ is
defined as follows,
\begin{equation}
\mu = \{1,1,\dots, 1\}  + (\sigma(A) \oplus (- \sigma(A)),
\end{equation}
where the matrix $A$ has entries $A_{ij} = \scalar{a_i}{j}$, $\sigma(A)$ denotes
the vector of singular values of  $A$ and the symbol $\oplus$ represents
concatenation of two vectors. If necessary, the vector $\sigma(A) \oplus (-
\sigma(A))$ is extended to the length $m+n$ with elements equal to zero.
\end{lemma}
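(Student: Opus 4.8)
The plan is to recognise the left-hand side as a single trace and then read off the spectrum of a sum of two orthogonal projections. Let $P=\sum_{i=1}^{m}\ketbra{i}{i}$ and $Q=\sum_{i=1}^{n}\ketbra{a_i}{a_i}$ be the orthogonal projections onto $\mathcal{M}=\spa\{\ket{1},\dots,\ket{m}\}$ and $\mathcal{N}=\spa\{\ket{a_1},\dots,\ket{a_n}\}$, so that
\begin{equation}
p_1+\dots+p_m+q_1+\dots+q_n=\tr\bigl(\rho\,(P+Q)\bigr).
\end{equation}
Since $\rho$ and $M:=P+Q$ are Hermitian, the standard trace inequality $\tr(XY)\le\sum_j\lambda^{\downarrow}_j(X)\,\lambda^{\downarrow}_j(Y)$ for Hermitian matrices gives
\begin{equation}
\tr(\rho M)\le\sum_{j}\lambda_j^{\downarrow}\,\eta_j^{\downarrow},
\end{equation}
where $\eta$ collects the eigenvalues of $M$ and $\lambda^{\downarrow}$ is zero-padded as needed. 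It then suffices to show that $\eta^{\downarrow}$ equals $\mu^{\downarrow}$ once both are zero-padded to a common length.

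To obtain the spectrum of $M$ I would invoke the structure theorem for a pair of subspaces (the CS decomposition): $\Cx^{N}$ splits into the four ``reducing'' pieces $\mathcal{M}\cap\mathcal{N}$, $\mathcal{M}\cap\mathcal{N}^{\perp}$, $\mathcal{M}^{\perp}\cap\mathcal{N}$, $\mathcal{M}^{\perp}\cap\mathcal{N}^{\perp}$, on which $M$ acts as multiplication by $2$, $1$, $1$, $0$, together with two-dimensional $M$-invariant subspaces indexed by the principal angles $\theta_k\in(0,\pi/2)$ between $\mathcal{M}$ and $\mathcal{N}$, on each of which $M$ has eigenvalues $1+\cos\theta_k$ and $1-\cos\theta_k$. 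Writing $d_0=\dim(\mathcal{M}\cap\mathcal{N})$ and letting $r$ be the number of principal angles in $(0,\pi/2)$, the eigenvalues of $M$ are thus: $2$ with multiplicity $d_0$; the pairs $1\pm\cos\theta_k$, $k=1,\dots,r$; the value $1$ with multiplicity $\dim(\mathcal{M}\cap\mathcal{N}^{\perp})+\dim(\mathcal{M}^{\perp}\cap\mathcal{N})$; and $0$ on the remaining $N-\dim(\mathcal{M}+\mathcal{N})$ dimensions. The link with $A$ comes from $(A^{\dagger}A)_{jk}=\bra{j}Q\ket{k}$: the matrix $A^{\dagger}A$ is the compression of $Q$ to $\mathcal{M}$, whose eigenvalues are the $\cos^{2}\theta_k$ together with $d_0$ ones (from $\mathcal{M}\cap\mathcal{N}$) and zeros (from $\mathcal{M}\cap\mathcal{N}^{\perp}$); hence the singular values of $A$ are exactly the numbers $\cos\theta_k$ padded by $d_0$ ones and some zeros.

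Finally I would match the two multisets. Substituting this description of $\sigma(A)$ into $\mu=\{1,\dots,1\}+\bigl(\sigma(A)\oplus(-\sigma(A))\bigr)$ yields the value $2$ with multiplicity $d_0$, the pairs $1\pm\cos\theta_k$, the value $1$ with the remaining multiplicity, and the value $0$ with multiplicity $d_0$; a short count, using $m+n=2d_0+\dim(\mathcal{M}\cap\mathcal{N}^{\perp})+\dim(\mathcal{M}^{\perp}\cap\mathcal{N})+2r$, shows this is precisely the spectrum of $M$ with $d_0$ extra zeros appended. Therefore $\eta^{\downarrow}=\mu^{\downarrow}$ after zero-padding both to length $\max(N,m+n)$, and combining this with the trace inequality proves the lemma. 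The conceptual steps — rewriting the sum as $\tr(\rho(P+Q))$ and diagonalizing a sum of two projections — are standard; I expect the real work to lie in the bookkeeping of multiplicities and in the degenerate cases ($m$ or $n$ equal to $0$ or $N$, non-trivial $\mathcal{M}\cap\mathcal{N}$ or $\mathcal{M}\cap\mathcal{N}^{\perp}$, and $A$ having fewer than $\min(m,n)$ nonzero singular values), together with pinning down the precise form of the trace inequality invoked.
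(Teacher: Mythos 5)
Your proposal is correct, and its first half coincides exactly with the paper's proof: both rewrite the left-hand side as $\tr\bigl(\rho\,(P+Q)\bigr)$ and invoke von Neumann's trace inequality for Hermitian matrices. Where you diverge is in computing the spectrum of $M=P+Q$. The paper writes $M=C^{\dagger}C$ for the $(m+n)\times N$ matrix $C$ whose rows are $\bra{1},\dots,\bra{m},\bra{a_1},\dots,\bra{a_n}$, uses the fact that $C^{\dagger}C$ and $CC^{\dagger}$ share their nonzero eigenvalues, and reads off $CC^{\dagger}=\bigl[\begin{smallmatrix}\mathbb{I}_m & A^{\dagger}\\ A & \mathbb{I}_n\end{smallmatrix}\bigr]$, whose spectrum is $1\pm\sigma(A)$ padded with ones; this delivers $\mu$ in three lines with essentially no case analysis. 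You instead invoke the two-subspace (CS / principal-angle) structure theorem for a pair of projections and then match multisets of eigenvalues by hand. Your route is sound — your identification of $A^{\dagger}A$ as the compression of $Q$ to $\mathcal{M}$ is correct, and the multiplicity count $m+n-2d_0-2r$ for the eigenvalue $1$ does close — but it carries exactly the degenerate-case bookkeeping you flag at the end, all of which the Gram-matrix trick sidesteps, since the block identity holds uniformly regardless of intersections or rank deficiency of $A$. What your version buys in exchange is the geometric interpretation of $\sigma(A)$ as cosines of principal angles between the two measurement subspaces, which the paper's algebraic computation leaves implicit.
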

For instance, in the case of $m=1$ and $n=4$, the matrix $A$ has one nonzero
singular value and
\begin{equation}
\mu^{\downarrow} = \{1+\sigma_1(A),1,1,1,1-\sigma_1(A)\}.
\end{equation}

Next proposition is necessary to  obtain the desired majorizing vector.
\begin{proposition}\label{prop:general-bound}
Let $n \leq m$. 
With notation presented above
 the following  inequalities are satisfied,
\begin{equation}
\begin{split}
&p_1+p_2+ \dots + p_m + q_1 +q_2 + \dots + q_n \leq
\\
&\leq
\lambda_1 (1 + s_{n+m-1}) +
\lambda_2 (1 + s_{n+m-2}) +
\dots +\lambda_n (1 + s_{m}) +
\\
&+ 
\lambda_{n+1}+
\lambda_{n+2}+
\dots  +
\lambda_{m}+
\\
&+ 
\lambda_{m+1} (1 - s_{m}) +
\lambda_{m+2} (1 - s_{m+1}) +
\dots +
\lambda_{m+n} (1 - s_{m+n-1}) 
\end{split}
\end{equation}
\end{proposition}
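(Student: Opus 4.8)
\emph{Proof plan.} The plan is to feed Lemma~\ref{lemma:general-inequality} with the two bases under consideration and then to bound the scalar product it produces; the only non-elementary ingredient will be an estimate on the singular values of a submatrix of $U$ in terms of the numbers $s_k$. After relabelling I may assume $p_1,\dots,p_m$ belong to the first $m$ computational basis vectors and $q_1,\dots,q_n$ to the first $n$ columns of $U$, so that $\ket{a_i}=U\ket{i}$ and $A_{ij}=\scalar{a_i}{j}=(U^{\dagger})_{ij}$. Hence $A$ is, up to a conjugate transpose, the $m\times n$ submatrix of $U$ occupying its first $m$ rows and first $n$ columns; it has $n$ singular values $\sigma_1\ge\dots\ge\sigma_n$, all lying in $[0,1]$ since $\|A\|\le\|U\|=1$. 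Lemma~\ref{lemma:general-inequality} then gives
\begin{equation}
p_1+\dots+p_m+q_1+\dots+q_n\le\lambda^{\downarrow}\cdot\mu^{\downarrow},\qquad\mu=\{1,\dots,1\}+\bigl(\sigma(A)\oplus(-\sigma(A))\bigr).
\end{equation}

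Next I would write $\mu^{\downarrow}$ explicitly: adding a constant to every coordinate preserves the order, so $\mu^{\downarrow}=(1+\sigma_1,\dots,1+\sigma_n,1,\dots,1,1-\sigma_n,\dots,1-\sigma_1)$ with $m-n$ ones in the central block. Writing the eigenvalues decreasingly, $\lambda^{\downarrow}=(\lambda_1,\lambda_2,\dots)$, the pairing reads
\begin{equation}
\lambda^{\downarrow}\cdot\mu^{\downarrow}=\sum_{i=1}^{n}\lambda_i(1+\sigma_i)+\sum_{i=n+1}^{m}\lambda_i+\sum_{i=1}^{n}\lambda_{m+n+1-i}(1-\sigma_i).
\end{equation}

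The core of the argument, which I expect to be the main obstacle, is the inequality $\sigma_i\le s_{m+n-i}$ for $i=1,\dots,n$ (with the convention $s_k=1$ for $k\ge N$). To establish it I would delete $i-1$ rows of the $m\times n$ submatrix above, obtaining a submatrix $B$ of $U$ with $m-i+1$ rows and $n$ columns; this is legitimate because $i\le n\le m$, and $(m-i+1)+n=(m+n-i)+1$ places $B$ in $\mathcal{SUB}(U,m+n-i)$, so $\|B\|\le s_{m+n-i}$. On the other hand, Cauchy's interlacing theorem for singular values, applied once per deleted row, shifts the index of the controlling singular value down by one at each step and yields $\sigma_i(A)\le\sigma_1(B)=\|B\|$. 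Combining the two estimates gives the claim; deleting columns rather than rows works equally well.

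Finally I would pair the $i$-th term of the first sum above with the $i$-th term of the last. Since $\lambda$ is decreasing and $i\le n\le(m+n)/2$ we have $\lambda_i\ge\lambda_{m+n+1-i}\ge0$, whence
\begin{equation}
\lambda_i(1+\sigma_i)+\lambda_{m+n+1-i}(1-\sigma_i)=\lambda_i+\lambda_{m+n+1-i}+\sigma_i(\lambda_i-\lambda_{m+n+1-i})\le\lambda_i(1+s_{m+n-i})+\lambda_{m+n+1-i}(1-s_{m+n-i}),
\end{equation}
using the singular-value bound. Summing over $i$, restoring $\sum_{i=n+1}^{m}\lambda_i$, and performing the substitution $i\mapsto n+1-i$ in the lower block $\sum_{i=1}^{n}\lambda_{m+n+1-i}(1-s_{m+n-i})$ (whose general term thereby becomes $\lambda_{m+i}(1-s_{m+i-1})$) reproduces term by term the right-hand side of the Proposition; together with the bound from Lemma~\ref{lemma:general-inequality} this is the assertion. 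The only points needing care are the correct ordering of $\mu^{\downarrow}$ and the degenerate case $m=n$, where the central block of ones is empty but nothing else changes.
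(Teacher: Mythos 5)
Your proof is correct and follows essentially the same route as the paper: apply Lemma~\ref{lemma:general-inequality}, bound $\sigma_i(A)\le\sigma_1(A^{(i-1)})\le s_{m+n-i}$ via Cauchy interlacing and the definition of $s_k$, and substitute into $\lambda^{\downarrow}\cdot\mu^{\downarrow}$. In fact you supply a detail the paper's one-line proof glosses over, namely that the substitution $\sigma_i\mapsto s_{m+n-i}$ only increases the bound because the coefficient $\lambda_i-\lambda_{m+n+1-i}$ of $\sigma_i$ in the paired sum is non-negative.
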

\begin{proof}
Using Cauchy interlacing property for singular values~\cite[Cor. 3.13]{hj2}
we obtain 
\begin{equation}
\sigma_k(A) \leq  \sigma_1(A^{(k-1)})
\end{equation}
where $A^{(k)}$ is a submatrix of $A$ obtained by deleting $k$ rows and/or 
columns from $A$. Substituting the above inequalities to 
Lemma~\ref{lemma:general-inequality} we complete the proof of 
the Proposition \ref{prop:general-bound}.
\end{proof}

\medskip

\proof[Proof of Theorem~\ref{th:main-th}]
In order to prove the main theorem we consider the largest possible bound from the 
Proposition~\ref{prop:general-bound},  with restriction that $n+m = k$.
Using elementary inequalities one notes, that the largest possible value is in 
the case when $n = m$ for  even $k=2 n$ and $n=m-1$ for odd $k=2n+1$. In this 
way we recover values $S_k$ defined in Eq.~\eqref{def:S-mixed}, which give us 
the desired majorization relation (\ref{major17}).

\hfill $\square$

Using Theorem~\ref{th:main-th} we are in position to upgrade the  entropic uncertainty 
relations for generalized entropies:
\begin{theorem}
With notation as above the following entropic uncertainty relations hold:
\begin{itemize}
\item[a)] for the Shannon entropy $H$,
\begin{equation}
\label{shann_mixed}
H(p) + H(q) \geq - \sum_{i} W^{(\lambda)}_i \ln W^{(\lambda)}_i;
\end{equation}
\item[b)] for the R\'enyi entropy $H_{\alpha}$ of order $\alpha<1$,
\begin{equation}
H_{\alpha}(p) + H_{\alpha}(q) \geq 
\frac{1}{1-\alpha}  \ln \left(\sum \left(W^{(\lambda)} \right)^{\alpha} - 1 
\right);
\end{equation}
\item[c)] for the Tsallis entropy $T_{\alpha}$ of any order $\alpha \geq 0$,
\begin{equation}
T_{\alpha}(p) + T_{\alpha}(q) \geq \frac{1}{1-\alpha} 
\left(\sum \left(W^{(\lambda)} \right)^{\alpha} - 2 \right).
\end{equation}
\end{itemize}
\end{theorem}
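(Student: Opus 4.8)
The plan is to derive all three entropic bounds as immediate corollaries of the majorization relation $p\oplus q\prec W^{(\lambda)}$ established in Theorem~\ref{th:main-th}, using the fact that each of the relevant entropic functionals, when evaluated on $p\oplus q$, reduces (up to an additive or multiplicative constant) to a Schur-concave function of the concatenated vector. The key observation is that $\sum_i f(p_i)+\sum_i f(q_i)=\sum_j f\bigl((p\oplus q)_j\bigr)$ for any function $f$ with $f(0)=0$, so the sum of two entropies is literally a single symmetric concave functional of the direct-sum vector, and majorization then does the work.

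First I would treat the Tsallis case, which is the cleanest. Since $T_\alpha(p)+T_\alpha(q)=\frac{1}{1-\alpha}\bigl(\sum_i p_i^\alpha+\sum_i q_i^\alpha-2\bigr)$, it suffices to control $g(v)=\sum_j v_j^\alpha$ on $v=p\oplus q$. For $\alpha\ge 0$ the map $v\mapsto \sum_j v_j^\alpha$ is Schur-concave on probability-like vectors (it is symmetric and concave for $0\le\alpha\le1$, and for $\alpha>1$ it is symmetric and convex, so Schur-convex — one must then track the direction of the inequality carefully together with the sign of $1-\alpha$, which is exactly why the stated bound comes out with the same sign for all $\alpha\ge0$). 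Applying $p\oplus q\prec W^{(\lambda)}$ gives $\sum_j (p\oplus q)_j^\alpha\ge\sum_j (W^{(\lambda)}_j)^\alpha$ when the functional is Schur-concave; dividing by $1-\alpha$ (and flipping appropriately when $\alpha>1$) yields part c). For the R\'enyi case with $\alpha<1$, I would start from the same inequality $\sum_j(p\oplus q)_j^\alpha\ge\sum_j(W^{(\lambda)}_j)^\alpha\ge1$, note that $p\oplus q$ and $W^{(\lambda)}$ each sum to $2$, and then use that $H_\alpha(p)+H_\alpha(q)=\frac{1}{1-\alpha}\ln\bigl(\sum_i p_i^\alpha\bigr)+\frac{1}{1-\alpha}\ln\bigl(\sum_i q_i^\alpha\bigr)$, together with the superadditivity of $\ln$ in the form $\ln a+\ln b\ge\ln(a+b-1)$ valid for $a,b\ge1$ (here $a=\sum p_i^\alpha\ge1$, $b=\sum q_i^\alpha\ge1$ since $\alpha<1$), to collapse the two logarithms into $\frac{1}{1-\alpha}\ln\bigl(\sum_j(p\oplus q)_j^\alpha-1\bigr)$, and finally apply Schur-concavity to replace $p\oplus q$ by $W^{(\lambda)}$; this gives part b). For the Shannon case a) I would take the $\alpha\to1$ limit, or argue directly: $H(p)+H(q)=-\sum_j(p\oplus q)_j\ln(p\oplus q)_j$ is a Schur-concave functional of $p\oplus q$ (the function $t\mapsto-t\ln t$ is concave with value $0$ at $t=0$), so majorization immediately gives $H(p)+H(q)\ge-\sum_j W^{(\lambda)}_j\ln W^{(\lambda)}_j$.

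The main obstacle, and the only place genuine care is needed, is the bookkeeping around the non-normalized direct-sum vectors: $p\oplus q$ has total mass $2$, not $1$, and $W^{(\lambda)}$ likewise sums to $2$ (which follows from the telescoping $\sum_k W^{(\lambda)}_k=S_{2N}-S_0$ together with $S_0=0$ and $S_{2N}=\sum_j\lambda_j\cdot 2=2$, using that all $s_k\le1$ so the $(1\pm s_j)$ factors average to $1$). One must confirm that majorization between vectors of equal mass $2$ still licenses the Schur-concavity inequalities — which it does, since Schur-concavity of a symmetric function on $\Rl^d_{\ge0}$ does not require the entries to sum to one — and that in the R\'enyi step the hypotheses $\sum p_i^\alpha\ge1$ and $\sum q_i^\alpha\ge1$ genuinely hold, which they do precisely because $\alpha<1$ and $p,q$ are sub-probability normalized to one so each $p_i,q_i\le1$ forces $p_i^\alpha\ge p_i$. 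I would state these normalization facts explicitly as a short preliminary remark and then let the three parts follow in a couple of lines each.
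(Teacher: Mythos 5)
Your proposal is correct and follows essentially the same route as the paper: apply the majorization $p\oplus q\prec W^{(\lambda)}$ together with Schur-concavity (or Schur-convexity, with the sign of $1-\alpha$) of the relevant symmetric functionals of the direct-sum vector, and for the R\'enyi case combine this with the subadditivity of $\ln(1+x)$, which is exactly your inequality $\ln a+\ln b\ge\ln(a+b-1)$ for $a,b\ge1$. Your explicit bookkeeping of the total mass $2$ of both vectors is a useful clarification but not a departure from the paper's argument.
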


\proof
The above inequalities follow from majorization relation
 (\ref{major17})
and the Schur concavity of generalized entropies. In the case 
of the R\'enyi entropy we use also 
the subadditivity of the function $\ln(1+x)$.

\hfill $\square$

\section{Purification of mixed states and bipartite pure states}
In the second part we apply the mixed-states majorization uncertainty relation derived above to the case of composite systems. To this end, we consider a pure state $\ket{\psi}$ of a bipartite system, namely $\ket{\psi} \in H_A 
\otimes H_B$. 
By performing a measurement described by a basis $\ket{a_i}$ on the subsystem $A$, we obtain a set of
outcomes, labeled by $i$, with their corresponding probabilities  
\begin{equation}
p_i = \tr \ketbra{\psi}{\psi} \left(\ketbra{a_i}{a_i} \otimes I_B \right)
= \tr \left(\tr_B \ketbra{\psi}{\psi}\right) \ketbra{a_i}{a_i} 
=\bra{a_i} \left(\tr_B \ketbra{\psi}{\psi}\right) \ket{a_i}.
\end{equation}
To create a scenario relevant for quantum mechanical uncertainty relations we shall now consider two von Neumann measurements $X$ and $Y$ performed on part $A$. For the sake of simplicity we assume that the first measurement is given by the computational basis, while the 
second one is provided by $\ket{a_i}$. These two measurements transform a bipartite (in general mixed)  input state as follows:
\begin{equation}\label{Trans}
\begin{split}
\rho_{AB} &\mapsto 
\rho_{XB} :=
\sum_{i}
\left(\ketbra{i}{i} \otimes I_B \right)
\rho_{AB}
\left(\ketbra{i}{i} \otimes I_B \right),  \\
\rho_{AB} &\mapsto 
\rho_{YB}:=
\sum_{i}
\left(\ketbra{a_i}{a_i} \otimes I_B \right)
\rho_{AB}
\left(\ketbra{a_i}{a_i} \otimes I_B \right). 
\end{split}
\end{equation}
The conditional entropies of the measurement outcomes and the system $B$ are then given by 
by:
\begin{equation}
H(X|B) = H(XB) - H(B),\qquad H(Y|B) = H(YB) - H(B).
\end{equation}

Now, with the help of (\ref{Trans}) we observe that 
\begin{equation}
\begin{split}
H(XB) &= H(\vec{p}),\quad \text{and}\quad H(YB) = H(\vec{q}),
\end{split}
\end{equation}
where
\begin{equation}
p_i = \tr \rho_{AB}\left(\ketbra{i}{i} \otimes I_B \right)
= \tr \left(\tr_B \rho_{AB} \right) \ketbra{i}{i} 
=\bra{i} \rho_A \ket{i},
\end{equation}
and by analogy $q_i= \bra{a_i} \rho_A \ket{a_i}$. Note in passing that whenever the input state $\rho_{AB}$ is pure, so that the entanglement between $A$ and $B$ is fully characterized by the vector of state's Schmidt coefficients $\vec{\lambda}$, we get a simplified expression for the sum of both 
conditional entropies,
\begin{equation}
H(X|B) + H(Y|B) = H(\vec{p})+H(\vec{q}) - 2H(\vec{\lambda}).
\end{equation}
%
%

\section{Conditional entropic uncertainty relations}
Now we are in position to establish the majorization entropic uncertainty relation for the conditional entropies. 
Theorem~\ref{th:main-th} gives us the majorizing vector $W^{(\lambda)}$, which in turn
provides a bound~\eqref{shann_mixed} for the sum of both measurement entropies.
To achieve the desired goal it is convenient to rewrite the majorizing vector in question with the help of matrix notation. To this end, we first define an auxiliary matrix:
\begin{equation}
\Lambda = 
\begin{bmatrix}
\lambda_1,&\color{red}\lambda_1,&\color{red}\lambda_1,&\color{red}\lambda_1,&\color{red}\lambda_1,&\dots,&\color{red}\lambda_1&\dots
 \\
\lambda_1,&\color{red}\lambda_2,&\color{red}\lambda_2,&\color{red}\lambda_2,&\color{red}\lambda_2,&\dots,&\color{red}\lambda_2&\dots\\
\lambda_2,&\color{red}\lambda_1,&\color{red}\lambda_3,&\color{red}\lambda_3,&\color{red}\lambda_3,&\dots,&\color{red}\lambda_3&\dots\\
\lambda_2,&\color{red}\lambda_2,&\color{red}\lambda_1,&\color{red}\lambda_4,&\color{red}\lambda_4,&\dots,&\color{red}\lambda_4&\dots\\
\lambda_3,&\lambda_3,&\color{red}\lambda_2,&\color{red}\lambda_1,&\color{red}\lambda_5,&\dots,&\color{red}\lambda_5&\dots\\
\lambda_3,&\lambda_3,&\color{red}\lambda_3,&\color{red}\lambda_2,&\color{red}\lambda_1,&\dots,&\color{red}\lambda_6&\dots\\
\lambda_4,&\lambda_4,&\lambda_4,&\color{red}\lambda_3,&\color{red}\lambda_2,&\dots,&\color{red}\lambda_7&\dots\\
\lambda_4,&\lambda_4,&\lambda_4,&\color{red}\lambda_4,&\color{red}\lambda_3,&\dots,&\color{red}\lambda_8&\dots\\
\lambda_5,&\lambda_5,&\lambda_5,&\lambda_5,&\color{red}\lambda_4,&\dots,&\color{red}\lambda_9&\dots\\
\lambda_5,&\lambda_5,&\lambda_5,&\lambda_5,&\color{red}\lambda_5,&\dots,&\color{red}\lambda_{10}&\dots\\
\lambda_6,&\lambda_6,&\lambda_6,&\lambda_6,&\lambda_6,&\dots,&\color{red}\lambda_{11}&\dots\\
\lambda_6,&\lambda_6,&\lambda_6,&\lambda_6,&\lambda_6,&\dots,&\color{red}\lambda_{12}&\dots\\
\vdots,&\vdots,&\vdots,&\vdots,&\vdots,&\ddots,&\vdots&\ddots\\
\end{bmatrix}
\end{equation}
which allows us to relate the vector
$W^{(\lambda)}$ defined by (\ref{def:Wlambda})
and the vector $W$ introduced in (\ref{def:W})
\begin{equation}
W^{(\lambda)} = \Lambda W.
\end{equation}
In the next step we consider a joint probability distribution given by a matrix 
\begin{equation}
P=\frac12 \Lambda \ \rm diag(W),
\end{equation}
where $\rm diag(W)$ is a diagonal matrix with the vector $W$ on its diagonal. 
The marginal distributions of $P$ are given 
by $\frac12 W^{(\lambda)}$ and $W$ respectively. 

One can check that $P$ is a permutation of the matrix
\begin{equation}
\frac12 
\begin{bmatrix}
\lambda \otimes W & 0 \\
0 & \lambda \otimes W 
\end{bmatrix},
\end{equation}
which has an internal structure of the tensor product. This fact implies the formula for the entropy,
\begin{equation}
H(P) =  H(\lambda) + H(W) + \log 2.
\end{equation}
On the other hand, we have that $\frac12 W^{(\lambda)}$  and $W$ are reductions 
of 
$P$, thus 
\begin{equation}
H\left(\frac12 W^{(\lambda)}\right) + H(W) = I(P) + H(P) = I(P) + H(\lambda) + 
H(W) + \log 2,
\end{equation}
where $I(P)$ is the mutual information (\ref{mutual})
of two variables described by a joint 
probability distribution $P$. 
We simplify the above relation to the form
\begin{equation}
H\left(\frac12 W^{(\lambda)} \right) = I(P) + H(\lambda) + \log 2,
\end{equation}
which can also be rewritten as
\begin{equation}
-\sum W^{(\lambda)}_i \ln W^{(\lambda)}_i
= 2 H\left(\frac12 
W^{(\lambda)}\right) - 2 \log 2 = 2 
I(P)+2 H(\lambda).
\end{equation}
This result allows us to express the sum of both conditional entropies and to find an 
information-theoretical interpretation of the relation~\eqref{shann_mixed}.

\begin{corollary}
In the notation presented above we have the following conditional entropic 
uncertainty relation
\begin{equation}\label{RSM}
H(X|B)+H(Y|B) = H(XB) + H(YB) - 2 H(B) \geq 
-\sum W^{(\lambda)}_i \ln W^{(\lambda)}_i
- 2 H(\lambda) = 2 I(P) \geq 0.
\end{equation}

\end{corollary}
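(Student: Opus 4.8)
The plan is to assemble the corollary from three ingredients that are all now in hand: the majorization relation of Theorem~\ref{th:main-th}, the Shannon-entropy bound~\eqref{shann_mixed} that follows from it, and the combinatorial identity for $-\sum_i W^{(\lambda)}_i \ln W^{(\lambda)}_i$ derived just above the corollary statement. First I would recall, from Section~4, that the two sequential von Neumann measurements $X$ and $Y$ on subsystem $A$ of the pure bipartite state produce outcome-plus-$B$ states whose spectra give $H(XB) = H(\vec p)$ and $H(YB) = H(\vec q)$, with $p_i = \bra{i}\rho_A\ket{i}$, $q_i = \bra{a_i}\rho_A\ket{a_i}$, and where $\vec\lambda$ is the Schmidt vector of $\ket\psi$, i.e. the spectrum of $\rho_A$. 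Thus the left-hand side $H(X|B)+H(Y|B)$ equals $H(\vec p)+H(\vec q) - 2H(B) = H(\vec p)+H(\vec q) - 2H(\vec\lambda)$, using $H(B)=S(\rho_B)=H(\vec\lambda)$ for a pure global state.

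Next I would invoke~\eqref{shann_mixed}, which gives $H(\vec p)+H(\vec q) \ge -\sum_i W^{(\lambda)}_i \ln W^{(\lambda)}_i$, valid because $W^{(\lambda)}$ is built precisely from the spectrum $\vec\lambda = $ spec$(\rho_A)$ via~\eqref{def:S-mixed}–\eqref{def:Wlambda}. Subtracting $2H(\vec\lambda)$ from both sides yields the first inequality of~\eqref{RSM}. Then I would substitute the identity established above the corollary, namely $-\sum_i W^{(\lambda)}_i \ln W^{(\lambda)}_i = 2 I(P) + 2 H(\lambda)$, so that the right-hand side collapses to $2I(P) + 2H(\lambda) - 2H(\lambda) = 2I(P)$. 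Finally, non-negativity of mutual information, $I(P)\ge 0$ for any joint distribution $P$, closes the chain and gives $H(X|B)+H(Y|B)\ge 2I(P)\ge 0$.

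There is essentially no analytical obstacle here: everything rests on results already proven (Theorem~\ref{th:main-th}, its entropic corollary, and the tensor-product structure of $P$). The only point requiring a little care — and the step I would flag as the ``main obstacle'' in an otherwise bookkeeping proof — is making sure the vector $\vec\lambda$ feeding into $W^{(\lambda)}$ is genuinely the spectrum of the \emph{reduced} state $\rho_A$, and that this coincides with the Schmidt coefficients appearing in $H(B)$; this is what allows the $2H(\lambda)$ terms to cancel exactly rather than merely bound one another. I would also note explicitly that $P = \tfrac12\,\Lambda\,\mathrm{diag}(W)$ is a bona fide probability matrix (its entries sum to one, since the marginals $\tfrac12 W^{(\lambda)}$ and $W$ each sum to one by construction of $s_N=1$ and the telescoping definition of $W$), so that $I(P)$ is well-defined and non-negative. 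With these observations in place the corollary follows immediately.
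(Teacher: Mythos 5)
Your proposal is correct and follows essentially the same route as the paper: the purification identities $H(XB)=H(\vec p)$, $H(YB)=H(\vec q)$, $H(B)=H(\vec\lambda)$, the Shannon-entropy bound~\eqref{shann_mixed}, the identity $-\sum_i W^{(\lambda)}_i\ln W^{(\lambda)}_i = 2I(P)+2H(\lambda)$ derived from the tensor-product structure of $P$, and non-negativity of mutual information. Your added remarks on checking that $\vec\lambda=\mathrm{spec}(\rho_A)$ matches the Schmidt vector and that $P$ is a genuine joint distribution are sensible but already implicit in the paper's construction.
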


It is important to observe that the left hand side of the above inequality is a
concave function of the quantum state, while the right hand side is a convex and
non-negative function of the parameters $\lambda$. The latter follows from the fact
that one marginal distribution of $P$ does not depend on  $\lambda$.

\subsection{One-qubit measurements}
Let us emphasize, that the major results \eqref{RSM} holds for an arbitrary
dimension $N$. Even though, in order to show all its advantages one needs to analyze the
case $N\geq 3$, we start the discussion by presenting a simple $N=2$ example,
for which our approach dominates the previous results (including the strong 
bound of Ref.~\cite{kurzyk2017conditional} specialized for one qubit 
measurements only)
for certain mixed states of high purity. Consider  a normalized $N=2$
mixed state $\rho$ with spectrum $\{ \lambda_1 , \lambda_2 \}$ and assume
that $\lambda_2=1-\lambda_1\le 1/2 \leq \lambda_1$.
We also assume that two orthogonal measurements are taken in bases $\ket{\chi_i}$ and 
$\ket{\varphi_i} $, $i=1, 2$, such that the probability of obtaining 
$i$-th outcome is given by $p_i = \bra{\chi_i} {\rho} \ket{\chi_i}$ and $q_i = 
\bra{\varphi_i} {\rho} \ket{\varphi_i}$ respectively. 
Alternatively, one can consider a pure state $\ket{\psi_{AB}} \in \mathcal{H}_A \otimes \mathcal{H}_B$, such that its partial trace 
is $\tr_B \ketbra{\psi_{AB}}{\psi_{AB}} = \rho$ and analyze local measurements performed on the subsystem $A$.
The spectrum $\{ \lambda_1 , \lambda_2 \}$ of the state, equal to the Schmidt vector of $\ket{\psi_{AB}}$,
allows us to bound from above the following sum of measurement probabilities.
\begin{equation}
\begin{split}
p_i &\leq \lambda_1 \\
p_i+q_j &\leq \lambda_1 (1 + s_1) + \lambda_2 (1 - s_1)\\
p_1+p_2+q_j &\leq 1+ \lambda_1\\
p_1+p_2+q_1+q_2 &= 2.
\end{split}
\end{equation}
The above inequalities render the form of  $W^{(\lambda)}$, the vector known to majorize the direct sum
\begin{equation}
p \oplus q \prec W^{(\lambda)} = \{\lambda_1, \lambda_1 s_1 + \lambda_2(1 - 
s_1), \lambda_1(1 - s_1) +\lambda_2 s_1 , \lambda_2\}.
\end{equation}
Note that the reasoning leading to Eq. (\ref{RSM}) implies the following
inequality concerning the sum of two conditional entropies,
\begin{equation}
H(R|B)+H(S|B) \ \geq  \ 2 I(P).
\end{equation}
In Fig.~\ref{fig:qubit-theta-plot} and Fig.~\ref{fig:qubit-lambda-plot} we compare
the above bound with other bounds   for the sum of
conditional entropies known from the literature. In the qubit case we consider the following lower bounds: ($B_B$) by
Berta et al.~\cite{BCRR10}; ($B_{KLJR}$) by Korzekwa et al.~\cite{KLJR14};
($B_{KPP}$) --- a strong bound valid only for single qubit measurements --- by Kurzyk et
al.~\cite{kurzyk2017conditional}; the bound ($B_{RPZ3}$) from~\cite{RPZ14}. It is
important to stress, that in the bounds for conditional entropy we reject the
negative part of the bound.

\begin{figure}[ht]
\centering
\subfloat[$\lambda=\frac13$]{\includegraphics[width=0.5\linewidth]{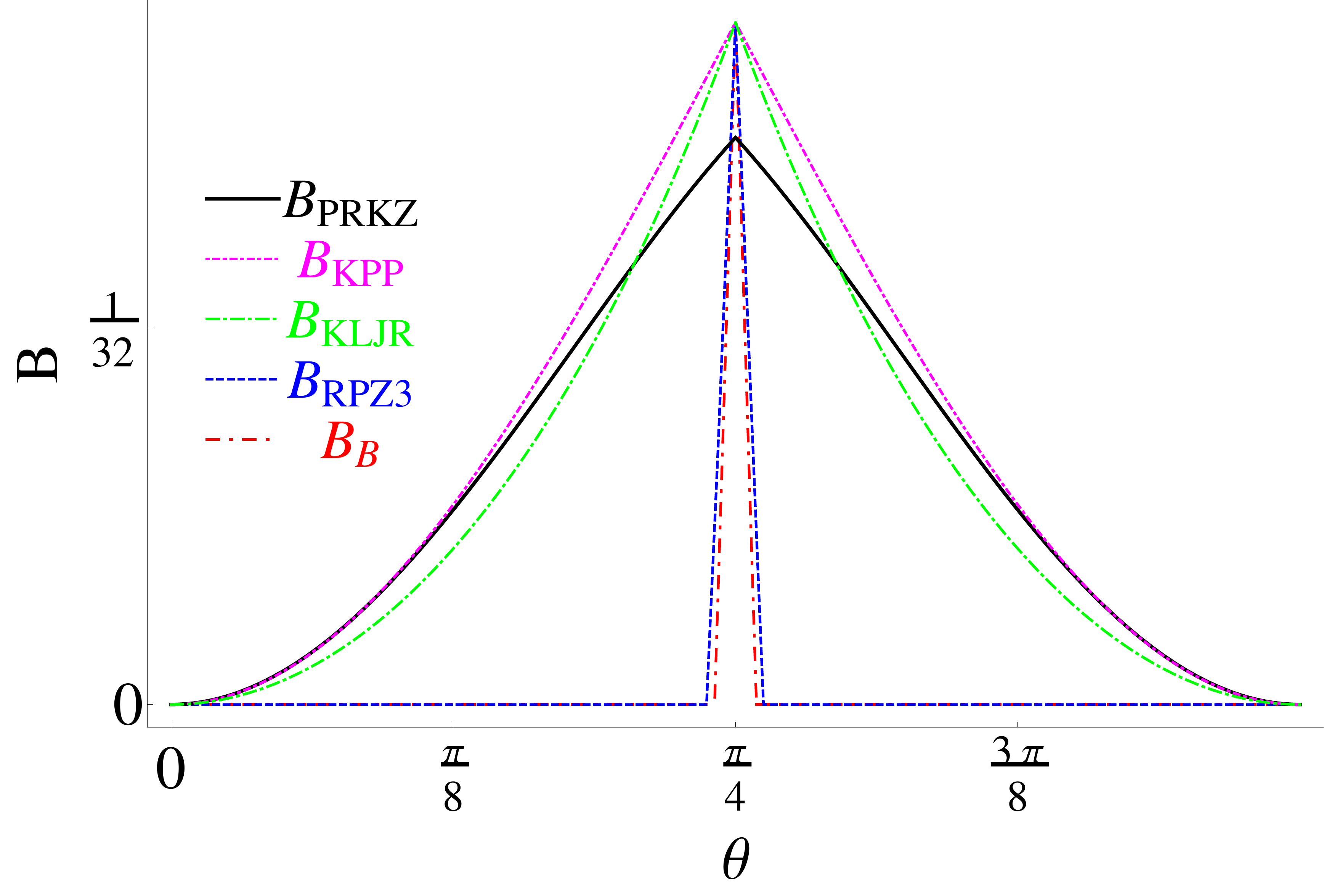}}
\subfloat[$\lambda=\frac16$]{\includegraphics[width=0.5\linewidth]{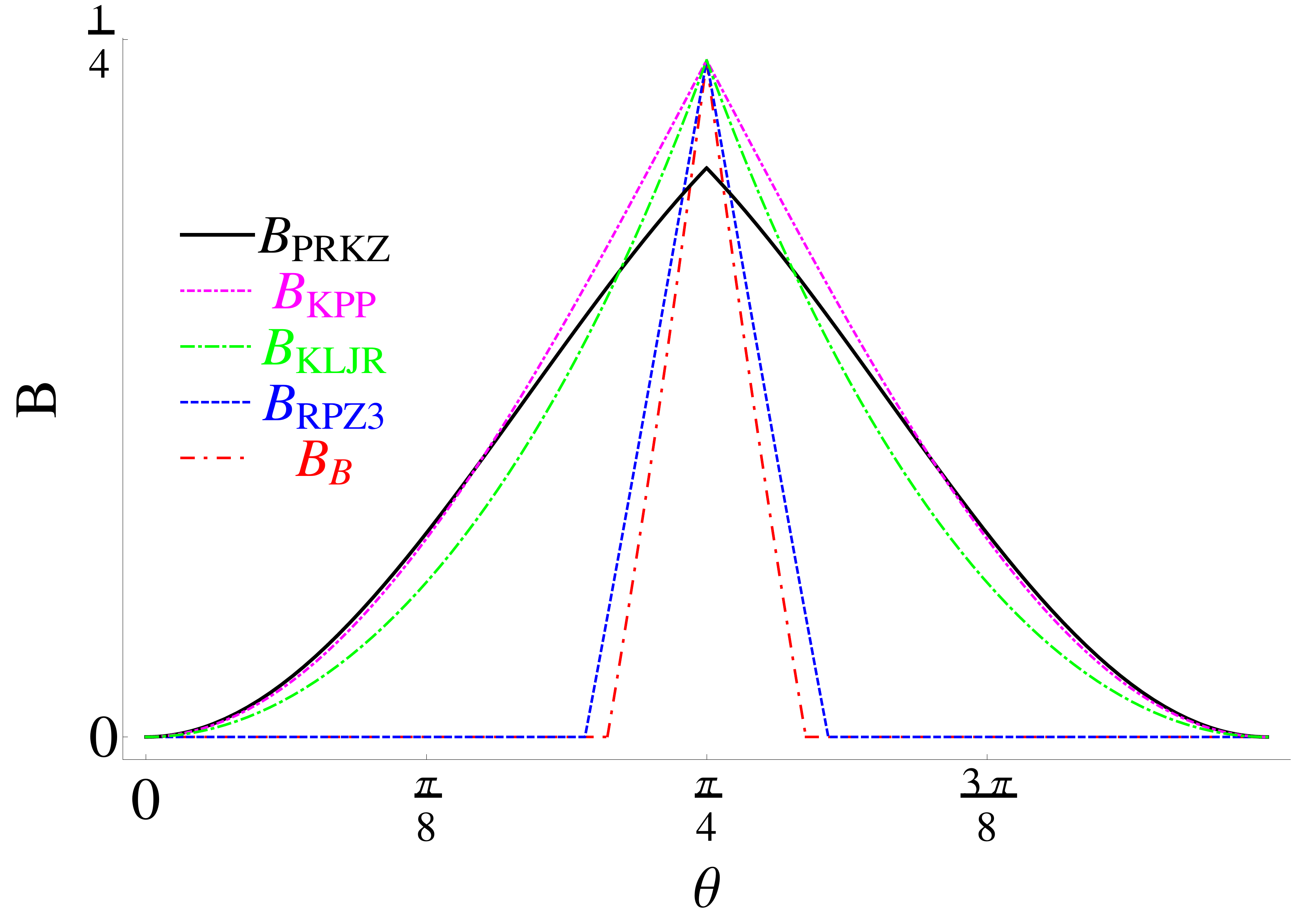}}
\caption{Lower bounds for a sum of conditional entropies in the case of two von
Neumann measurements performed on a single qubit. One measurement is in a computational basis, the
second one is in a basis rotated by an angle $\theta$, obtained for a mixed state
with spectrum $\{\lambda,1-\lambda\}$. Our bound ( {\bf --} thick solid curve 
$B_{PRKZ})$~\eqref{RSM} 
is compared with bound ($B_B$) by Berta et al.~\cite{BCRR10}; ($B_{KLJR}$) by 
Korzekwa et al.~\cite{KLJR14}; ($B_{KPP}$) by Kurzyk et 
al.~\cite{kurzyk2017conditional} (valid for one qubit measurements only) and bound ($B_{RPZ3}$) from~\cite{RPZ14}. } 
\label{fig:qubit-theta-plot}
\end{figure}

\bigskip

\begin{figure}[ht]
\centering
\subfloat[$\theta=\frac{\pi}{3}$]{\includegraphics[width=0.5\linewidth]{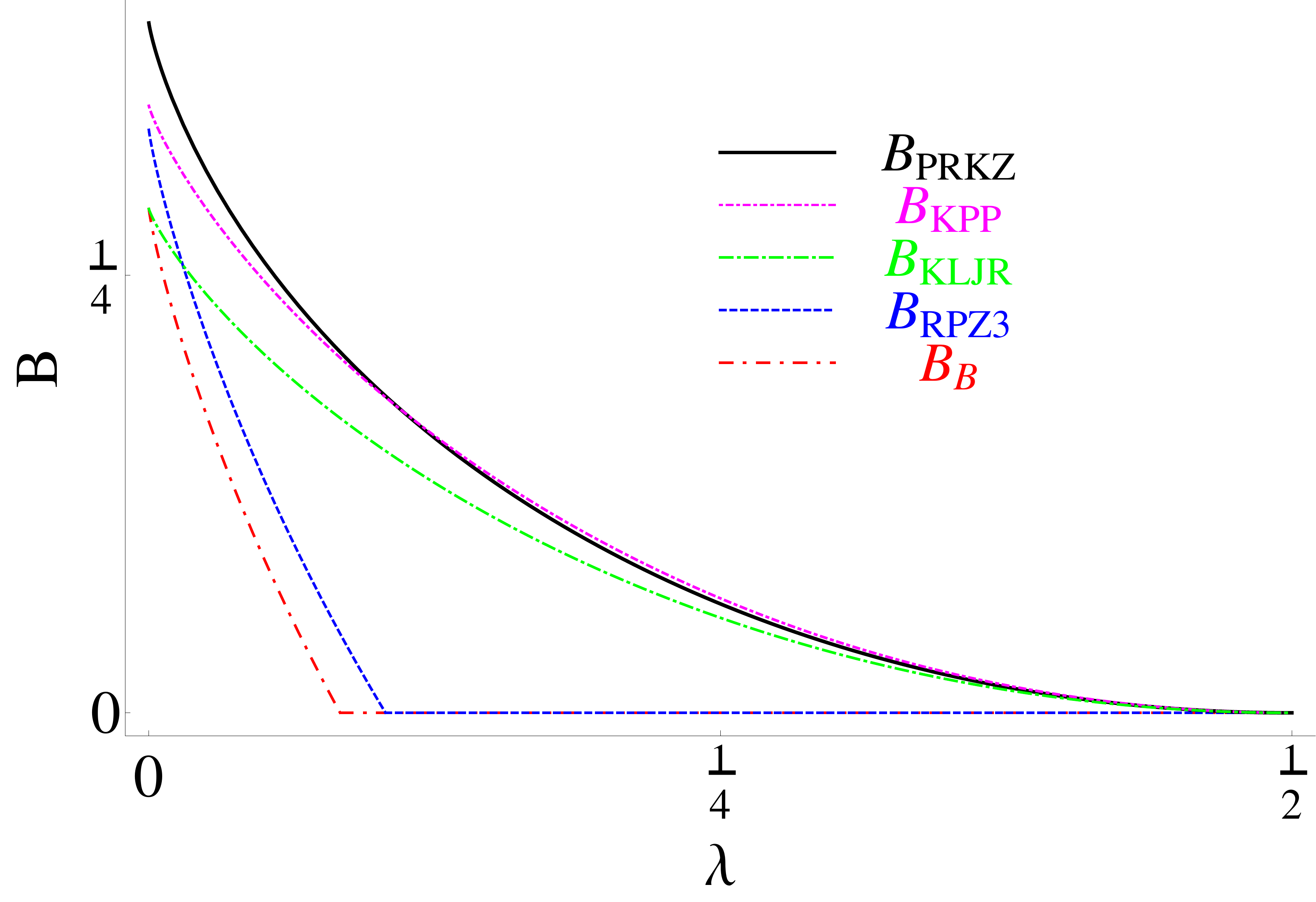}}
\subfloat[$\theta=\frac{\pi}{8}$]{\includegraphics[width=0.5\linewidth]{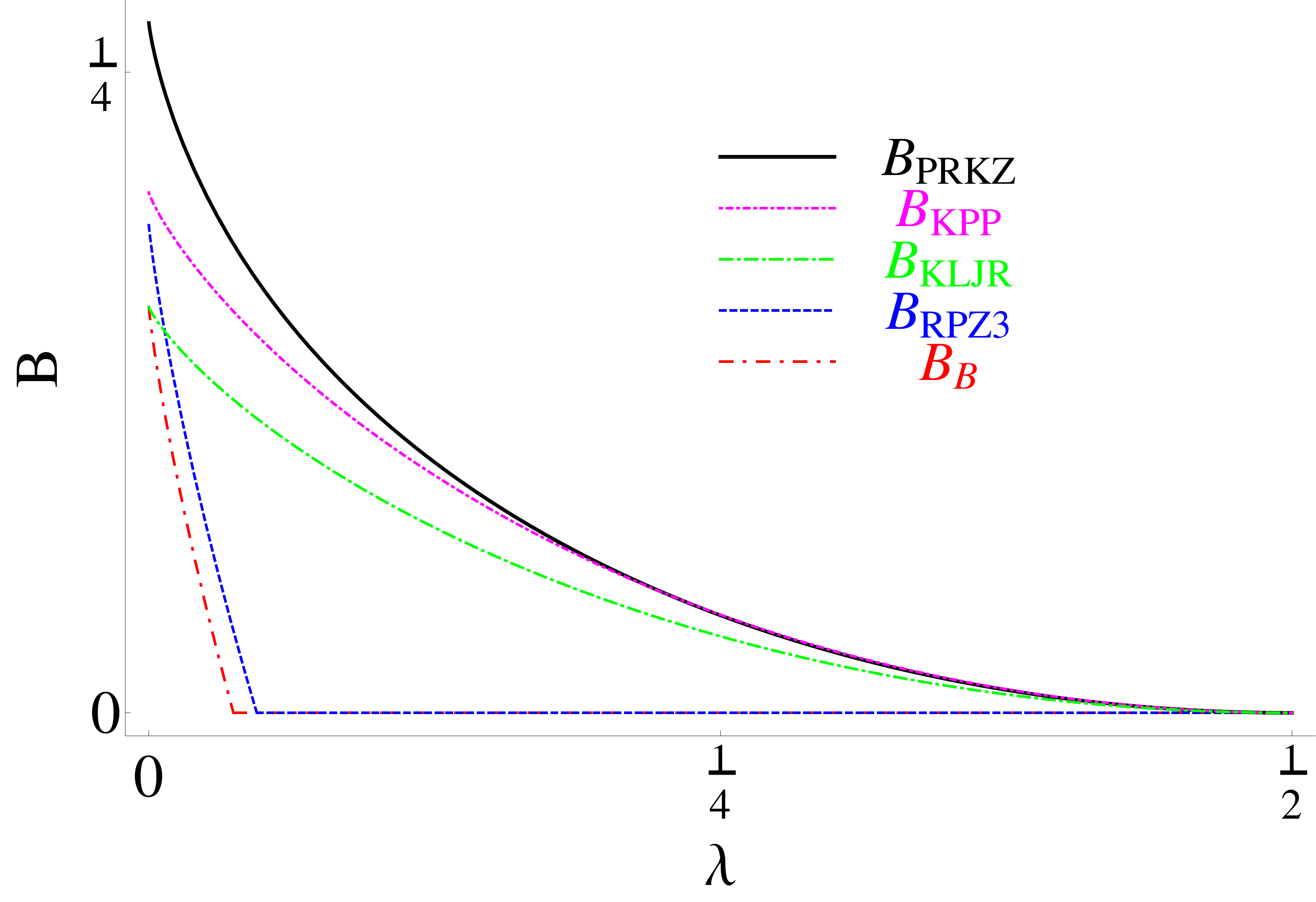}}
\caption{
Uncertainty relations for a one qubit system measured in a basis rotated by an
angle $\theta$, as a function of the smaller eigenvalue $\lambda \in [0,1/2]$.
For mixed states of a high purity (small $\lambda$) the new bound $B_{PRKZ}$ 
(\ref{shann_mixed}) dominates the results 
($B_B$) obtained by  Berta et al.\cite{BCRR10}; ($B_{KLJR}$) by Korzekwa et al.
\cite{KLJR14}; ($B_{KPP}$) by Kurzyk et al.~\cite{kurzyk2017conditional} and our
earlier bound ($B_{RPZ3}$) from \cite{RPZ14}.
} \label{fig:qubit-lambda-plot}
\end{figure}

\bigskip

Let us note in passing that since for $N=2$  one has $s_1 \ge 1/\sqrt{2}$,  we obtain
\begin{equation}
\lambda_1 (1 + s_1) + \lambda_2 (1 - s_1) \leq \lambda_1 + s_1,
\end{equation}
which gives us another majorization relation 
\begin{equation}
 W^{(\lambda)}  \prec \{s_1,1-s_1,\lambda_1,\lambda_2\} = W \oplus \lambda,
\end{equation}
where $W= \{s_1,1-s_1\} $ is a vector used to obtain in~\cite{RPZ14}
the direct sum majorization relation  for pure states. 

Thus for the case of the Shannon entropy 
$H$ we get
\begin{equation}
H(p) + H(q) = \tilde{H}(p \oplus q) \geq \tilde{H}(W^{(\lambda)})
\geq \tilde{H}(W \oplus \lambda) = H(W) + H(\lambda),
\end{equation}
where $\tilde{H}$ is a Schur-concave function $\tilde{H}:x \mapsto -\sum x_i \log(x_i)$ (it is not the same as the entropy, because it also allows vector arguments which are not normalized).
Note that $\{1\} \oplus W=W^{(\{1,0\})}$, 
so we are in agreement with the bounds obtained in~\cite{RPZ14}
for the Shannon entropy in the case of pure states.

\subsection{Qutrit measurements}
In this section we illustrate our result in the case of two qutrit 
measurements. Let $\ket{\psi_{AB}}$ be a two-qutrit pure state with a given 
Schmidt vector $\lambda$, determining the spectrum of the partial trace $\rho$. 
Taking a fixed orthogonal matrix 
\begin{equation} \label{eqn:matrix-O3}
O_3 = \frac{1}{\sqrt{6}}
\begin{bmatrix}
 \sqrt{2} & \sqrt{2} & \sqrt{2} \\
 \sqrt{3} & 0 & -\sqrt{3} \\
 1 & -2 & 1 \\
\end{bmatrix},
\end{equation}
we consider lower bounds for conditional entropies $H(X|B) + H(Y|B) $, 
and compare our lower bound~\eqref{RSM}, the bound by Berta et al.~\cite{BCRR10}
and the bound $B_{RPZ_3}$ from~\cite{RPZ14}. Here we also omit the negative
part of the bound, presenting (see Fig.~\ref{fig:qutrit-fig}) contour plots for the bounds on the $2D$ simplex of
eigenvalues $\lambda_1+\lambda_2+\lambda_3=1$. 

\begin{figure}[ht]
\centering
\subfloat[Bound 
$B_{PRKZ}$]{\includegraphics[width=0.25\linewidth]{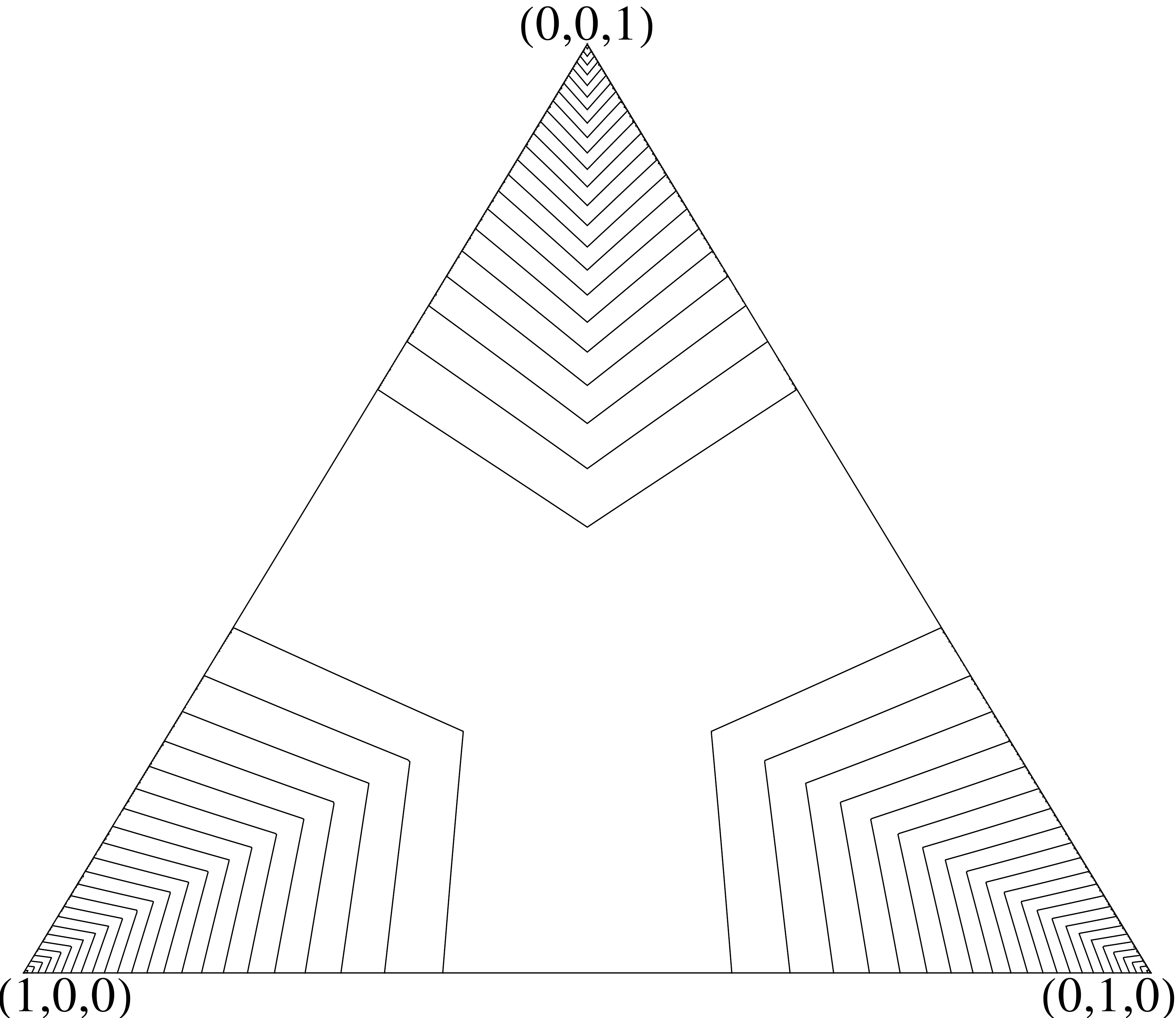}}
\ \ \
\subfloat[Bound 
$B_B$]{\includegraphics[width=0.25\linewidth]{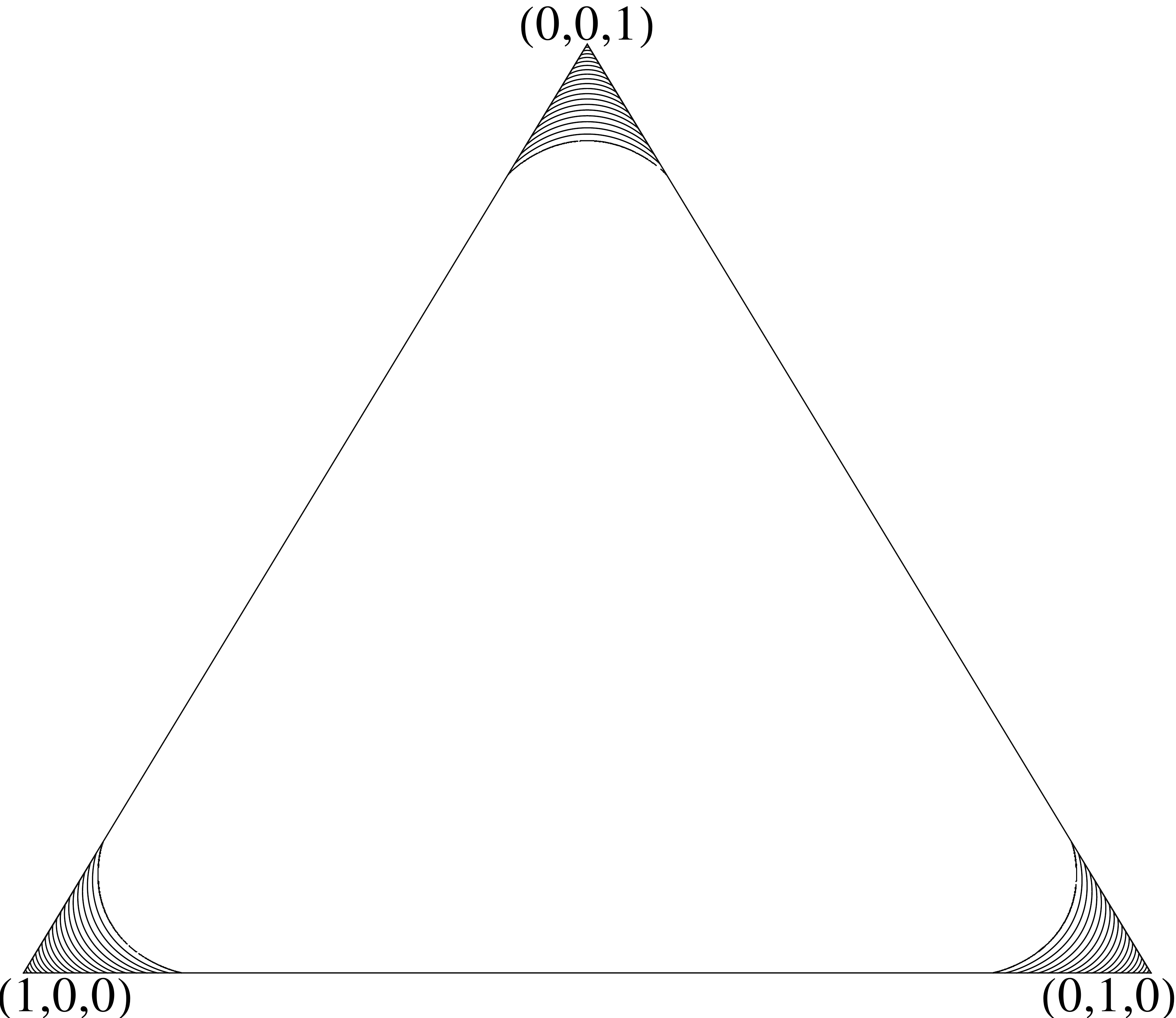}}
\ \ \
\subfloat[Bound 
$B_{RPZ_3}$]{\includegraphics[width=0.25\linewidth]{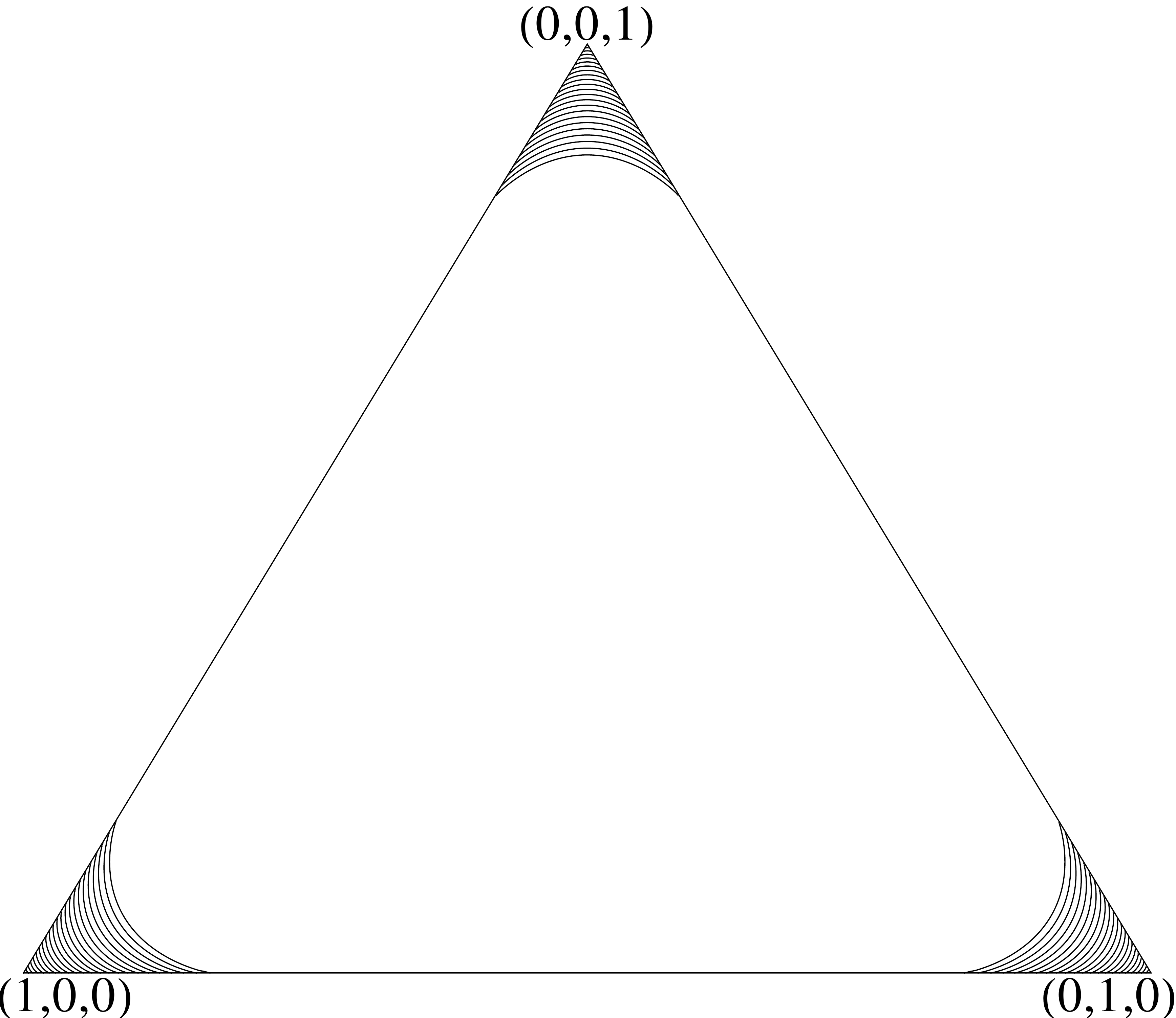}}
\caption{
Comparison of the lower bounds for conditional entropies for a chosen fixed set 
of two measurements determined by the orthogonal matrix $O_3$ introduced in 
Eq.~\eqref{eqn:matrix-O3} as a function of the ordered spectra $\lambda_1\ge 
\lambda_2 \ge \lambda_3$ represented for each bound by contour plots in the 
probability simplex obtained using the same heights of the countour lines. The new bound~\eqref{RSM} shown in panel (a) outperforms 
earlier results (b) bound $B_B$ from~\cite{BCRR10} and (c) bound $B_{RPZ_3}$ 
from~\cite{RPZ14}.
}
\label{fig:qutrit-fig}
\end{figure}

\section{Concluding Remarks}

The primary goal of our 
investigations was to extend the quantum information toolbox relevant for
description of processes involving comparisons of probability vectors. The
major conceptual feature of our results is that they are based on the idea of
majorization, which in turn has several appealing practical interpretations due
to its ability to create a partial order and capture efforts necessary to
transform one probability distribution to the other. Apart from an expected
outcome of our work, namely the majorization UR for mixed states Eq.
\ref{major17}, we found an interesting uncertainty relation for conditional
entropies Eq. \ref{RSM}. Note that the results obtained in this work for the
setup of two orthogonal measurements performed on a mixed quantum state can also
be applied for a pure state of a bipartite system subjected to two measurements
performed locally on a single subsystem.

A natural open issue is to generalize the bounds presented here for the case of
three or more orthogonal measurements or to allow for a wider class of quantum
measurements. It might be also interesting to identify measurements for which
the bound provided are tight and to look for quantum states for which the
minimal values of the sum of entropies is achieved.

So far the uncertainty relations based on majorization have been established for
finite-dimensional coarse-grained continuous observables~\cite{LR2015}, and also
for quantum channels~\cite{RZ16}. Insightful and challenging
applications/extensions of the formalism and results presented in this paper are
thus situated in the field of continuous variables, where various majorization
techniques have already proven their usefulness \cite{Cerf1,Cerf2,Cerf3}.

\acknowledgements

It is a pleasure to thank  Patrick Coles for fruitful discussions. 
Financial support by the Polish National Science
Centre (NCN) under the grants number  2015/17/B/ST6/01872 (Z.P.), 
2016/22/E/ST6/00062 (A.K.) and DEC-2015/18/A/ST2/00274 (K.{\.Z}) 
is acknowledged.
\L .R. acknowledges
financial support by grant number 2014/13/D/ST2/01886 of the National
Science Center, Poland.

\appendix
\section{Proof of lemma}
In this appendix we restate and later proof 
Lemma~\ref{lemma:general-inequality}.

{\bf Lemma~\ref{lemma:general-inequality}.}\emph{Let $\lambda$ be a given 
vector of non-negative eigenvalues summing up to one.
Let $\rho$ be a mixed state with eigenvalues given by $\lambda$.
Let $p_i = \bra{i} \rho \ket{i}$ and $q_i =  \bra{a_i} \rho \ket{a_i}$. We 
assume, that $\{\ket{i}\}_i$ and $\{\ket{a_i}\}_i$ form orthonormal bases. 
Then we have the following inequality
\begin{equation}
p_1+ \dots + p_m + q_1+\dots + q_n \leq \lambda^{\downarrow} \cdot 
\mu^{\downarrow},
\end{equation}
where $\cdot$ denotes the scalar product. 
The vector $\mu$ of length $m+n$
is defined
\begin{equation}
\mu = \{1,1,\dots, 1\}  + (\sigma(A) \oplus (- \sigma(A)),
\end{equation}
for $A_{ij} = \scalar{a_i}{j}$ where
 $\sigma(A)$ denotes a vector of singular values of the
matrix $A$ and the symbol $\oplus$ represents concatenation
of two vectors. 
If necessary, we extend the vector $\sigma(A) \oplus (- \sigma(A))$
to length $m+n$ with zeros. 
}

Note, that the above Lemma is a generalization of Lemma 1
from~ \cite{PRZ13}, in the case of $\lambda = \{1,0,\dots,0\}$
we recover it.


\begin{proof}
We define a matrix $M$ as
\begin{equation}
M = \ketbra{1}{1} + \dots + \ketbra{m}{m} + \ketbra{a_1}{a_1} + \dots + \ketbra{a_n}{a_n}.
\end{equation}
Next we calculate
\begin{equation}
\begin{split}
p_1+ \dots + p_m + q_1+\dots + q_n 
&=
\tr \rho (\ketbra{1}{1} + \dots + \ketbra{m}{m} + \ketbra{a_1}{a_1} + \dots + \ketbra{a_n}{a_n})\\
&=
\tr \rho M \leq  \lambda^{\downarrow} \cdot eig(M)^{\downarrow}.
\end{split}
\end{equation}
The last inequality is a von Neumann's trace inequality.
Now we note, that 
\begin{equation}
M = C^{\dagger} C,
\end{equation}
for a vector $C$ of length $m+n$ defined in \cite[lemma 1]{PRZ13}  
\begin{equation}
C = 
\begin{bmatrix}
\bra{1} \\ 
\bra{2} \\ 
\vdots \\
\bra{m} \\
\bra{a_1} \\ 
\bra{a_2} \\ 
\vdots \\
\bra{a_n} 
\end{bmatrix}.
\end{equation}
To calculate eigenvalues of $M$ in terms of the coefficients $A_{ij}$ we write
\begin{equation}
\begin{split}
eig(M) &= eig(C^{\dagger} C) = eig(C C^{\dagger}) \ \ \ 
\\
&= 
eig
\begin{bmatrix}
\mathbb{I}_m & A^{\dagger} \\ 
A & \mathbb{I}_{n}
\end{bmatrix} 
= 
1+ eig
\begin{bmatrix}
0 & A^{\dagger} \\ 
A & 0
\end{bmatrix} 
\\ 
&= 
\{1 + \sigma(A)\} \oplus \{1 - \sigma(A)\} = \mu.
\end{split}
\end{equation} 
Note that the equality in the first line
of the above reasoning concerns only nonzero eigenvalues
of the matrices $C^{\dagger} C$ and $C C^{\dagger}$.
\end{proof}

\bibliographystyle{ieeetr}
\bibliography{mixmaj} 

\end{document}